\newtheorem{theorem}{Theorem}
\newtheorem{lemma}[theorem]{Lemma}
\newcommand{\cm}[1]{}
\newenvironment{proof}{\noindent{\bf Proof:}}{
\hspace*{\fill} $\Box$ \vskip \belowdisplayskip}
\begin{document}
\author{Glencora Borradaile \\Oregon State University \and Claire Mathieu\\Brown University 
 \and Theresa Migler \\Oregon State University}
\title{Lower bounds for testing digraph connectivity with one-pass streaming algorithms}
\maketitle

\begin{abstract}
  In this note, we show that three graph properties - strong
  connectivity, acyclicity, and reachability from a vertex $s$ to all
  vertices - each require a working memory of $\Omega (\epsilon m)$ on
  a graph with $m$ edges to be determined correctly with probability
  greater than $(1+\epsilon)/2$.
\end{abstract}

In the streaming model of computation, the input is given as a
sequence, or {\em stream}, of elements. There is no random access to
the elements; the sequence must be scanned in order. The goal is to
process the stream using a small amount of working memory. For an
overview see \cite{muthukrishnan2005}. There has been much research
devoted to the study of streaming algorithms, most notably the
G\"{o}del-prize winning work of Alon, Matias, and Szegedy \cite{ams1996}.


For undirected graph problems, there are many lower bounds in the edge
streaming model. Henzinger, Raghaven, and Rajagopalan presented a
deterministic lower bound of $\Omega({n})$ for the working memory
required for the following undirected graph problems: computing the
connected components, vertex-connected components, and testing graph
planarity of $n$-vertex graphs \cite{hrr1998}.  Feigenbaum, Kannan,
and Zhang show that any exact, deterministic algorithm for computing
the diameter of an undirected graph in the Euclidean plane requires
$\Omega (n)$ bits of working memory \cite{fkz2004}.  Zelke shows that
any algorithm that is able to find a minimum cut of an undirected
graph requires $\Omega(m)$ bits of working memory, this remains true
even if randomization is allowed \cite{zelke2011}.

For directed graphs problems, the ones most likely to come up in
analyses of the internet, much less is known. Henzinger et al. showed
that for any $0<\epsilon <1$, estimating the size of the transitive
closure of a DAG with relative expected error $\epsilon$ requires
$\Omega (m)$ bits of working memory \cite{hrr1998}. Feigenbaum et
al.~\cite{fkmsz2005b} showed that testing reachability from a given
vertex $s$ to another given vertex $t$ requires $\Omega(m)$ bits of
space, Guruswami and Onak~\cite{go2012} showed that even with $p$
passes, the problem requires
$\Omega(n^{1+1/(2(p+1))}/p^{20}\log^{3/2}n)$ bits of space to be
solvable with probability at least 9/10.

As for upper bounds in undirected graphs, there are one-pass
algorithms for connected components, $k$-edge and $k$-vertex
connectivity ($k\leq 3$), and planarity testing that use $O(n\log n)$
bits of working memory \cite{hrr1998}. There is an algorithm that
approximates the diameter within $1+\epsilon$ using $O({1\over
  \epsilon})$ bits \cite{fkz2004}. For upper bounds in directed
graphs, there is an algorithm that computes the exact size of the
transitive closure using $O(m\log n)$ bits of working
memory\cite{hrr1998}.

In this short note, we consider three basic connectivity questions in
directed graphs: determining if a graph is strongly connected,
determining if a graph is acyclic, and determining if a vertex $s$
reaches all other vertices.  A directed graph $G=(V,E)$ is said to be
{\em strongly connected} if for every pair of vertices $u,v\in V$
there is a path from $u$ to $v$ and a path from $v$ to $u$. A directed
graph $G=(V,E)$ is said to be {\em acyclic} if $G$ contains no
cycles. We say that a vertex $s$ {\em reaches} a vertex $v$ if there
is a directed path from $s$ to $v$.

We show that, even with randomization, these graph properties each require
$\Omega (m)$ bits of working memory to be decided with probability greater
than $(1+\epsilon)/2$ by a one-pass streaming algorithm on $n$
vertices and $m$ edges. For these lower bounds we will use simple
reductions from the index problem (or the bit-vector problem) in communication complexity:

\begin{quote}
  Alice has a bit-vector $x$ of length $m$. Bob has an index $i\in\{
  1,2,\ldots ,m\}$ and wishes to know the $i$th bit of $x$. The only
  communication allowed is from Alice to Bob.
\end{quote}

The following is a rewording of Theorem~2 from Ablayev~\cite{ablayev1996}.

\begin{theorem} 
\label{thm:communication}
  For Bob to correctly determine $x_i$ with probability better than
  $(1+\epsilon)\over 2$, $\Omega (\epsilon m)$ bits of
  communication are required.
\end{theorem}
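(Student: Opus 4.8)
The plan is to obtain this as the specialization to the index (bit-vector) function of Ablayev's general lower bound for one-way probabilistic communication \cite{ablayev1996}, so the real work is to set up that specialization rather than to reprove the general machinery. Model Alice's behaviour on input $x\in\{0,1\}^m$ as the distribution $M(x)$ of her (at most $c$-bit) message over her coins, and model Bob's behaviour on index $i$ as a fixed decoding rule $g_i$ applied to the received message; the hypothesis is that for every $x$ and every $i$ we have $\Pr[g_i(M(x))=x_i]>(1+\epsilon)/2$. The target $c=\Omega(\epsilon m)$ will follow once the index function is shown to realize the extremal case of Ablayev's bound, so the steps are (i) extract the necessary separation structure, and (ii) feed it into his counting argument.

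First I would record the elementary separation property. Fix $i$ and compare the inputs $x$ and $x^{\oplus i}$, which differ only in coordinate $i$. Since Bob applies the same rule $g_i$ to both, yet must answer $x_i$ on the first input and $\overline{x_i}$ on the second, each with probability greater than $(1+\epsilon)/2$, the preimage $A_i=\{\mu:g_i(\mu)=x_i\}$ satisfies $\Pr[M(x)\in A_i]-\Pr[M(x^{\oplus i})\in A_i]>\epsilon$, whence $\|M(x)-M(x^{\oplus i})\|_{\mathrm{TV}}\ge\epsilon$. Thus along every one of the $m\,2^{m-1}$ edges of the Boolean cube, in all $m$ coordinate directions, the message distributions are $\epsilon$-separated in total variation. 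The second step is to hand this family of neighbour-indexed, $\epsilon$-separated distributions on an alphabet of size $\le 2^c$ to Ablayev's distinguishability/counting argument, which is exactly engineered to convert simultaneous $\epsilon$-distinguishability of all neighbouring inputs into the bound $c=\Omega(\epsilon m)$.

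The step I expect to be the real obstacle is pinning down the \emph{linear} dependence on $\epsilon$, and this is precisely where one must not fall back on entropy. The tempting route---lower bounding $I(x_i;M)$ by Fano and summing over $i$---yields only $m\bigl(1-H_2(\frac{1+\epsilon}{2})\bigr)=\Omega(\epsilon^2 m)$, because mutual information is quadratic in statistical distance near $\frac12$; this is the loss the plan must avoid by keeping distinguishability measured in total variation, which is linear in $\epsilon$. It is equally essential to use that the guarantee holds for the \emph{worst} index $i$ rather than on average, since an averaged guarantee is defeated by majority-style encodings that spend only $\Theta(\epsilon^2 m)$ bits. Finally, the separation property alone is not sufficient and the argument must couple it with the consistency of Bob's $m$ decoders on a single shared message---a lone parity bit, for instance, $\epsilon$-separates all neighbours yet decodes nothing. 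Showing that this combination forces $c=\Omega(\epsilon m)$ for the index function, rather than the $\Omega(\epsilon^2 m)$ that entropy gives, is the crux, and it is the content imported from Ablayev's Theorem~2.
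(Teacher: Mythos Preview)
Your proposal is aligned with the paper's treatment: the paper does not give its own proof of this theorem at all, but simply states it as ``a rewording of Theorem~2 from Ablayev~\cite{ablayev1996}.'' You likewise reduce the statement to Ablayev's general one-way lower bound, but you go further by sketching the specialization (the total-variation separation of neighbouring inputs and the warning against the entropy route that loses a factor of $\epsilon$). So the approaches agree; your write-up is strictly more detailed than what the paper provides.
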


We will now state and prove our main Lemma:

\begin{lemma}\label{thm:prob}
  Any algorithm that correctly determines the following
  graph properties with probability better than $(1+\epsilon)\over
  2$ requires $\Omega (\epsilon m)$ bits of working memory:

  \noindent acyclicity, strong connectivity, and
  reachability of all from $s$.

\end{lemma}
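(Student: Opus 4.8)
The plan is to reduce the index problem of Theorem~\ref{thm:communication} to each of the three decision problems. Given a one-pass streaming algorithm that uses $S$ bits of memory and decides one of the properties correctly with probability better than $(1+\epsilon)/2$, I would obtain a protocol for index as follows: Alice, holding the bit-vector $x \in \{0,1\}^m$, builds a set of at most $m$ edges --- one potential edge $e_k$ per coordinate $k$, included exactly when $x_k = 1$ --- feeds them to the streaming algorithm in some order, and sends the resulting memory state (the $S$ bits of communication) to Bob; Bob, holding the index $i$, restarts the algorithm from that state, streams $O(m)$ additional edges that depend only on $i$, and outputs the algorithm's answer. Each construction will have the property ``the graph satisfies the property $\iff$ $x_i = 1$'', so Bob recovers $x_i$ with probability better than $(1+\epsilon)/2$, and Theorem~\ref{thm:communication} forces $S = \Omega(\epsilon m)$. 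Since every graph produced has $O(m)$ edges (and hard index instances yield $\Theta(m)$ edges), this is $\Omega(\epsilon)$ times the number of edges, as required.

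For acyclicity, let the vertex set be $\{u, w_1, \dots, w_m\}$; Alice puts in the edge $u \to w_k$ whenever $x_k = 1$, so her graph is a star out of $u$ and is acyclic. Bob adds the single edge $w_i \to u$. The only edge entering $u$ is $w_i \to u$, and the only vertex among the $w_k$ with an outgoing edge is $w_i$, so the only possible cycle is $u \to w_i \to u$; hence the graph has a cycle if and only if the edge $u \to w_i$ is present, i.e.\ $x_i = 1$.

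For reachability of all vertices from $s$, let the vertex set be $\{s, h, w_1, \dots, w_m\}$; Alice puts in $h \to w_k$ whenever $x_k = 1$, and Bob adds $s \to h$ together with $s \to w_k$ for every $k \neq i$. Then $s$ reaches $h$ and every $w_k$ with $k \neq i$ directly, while the only edge into $w_i$ is $h \to w_i$, so $s$ reaches $w_i$ --- equivalently, $s$ reaches all vertices --- precisely when $x_i = 1$. For strong connectivity I would reuse this graph and have Bob additionally add $h \to s$ and $w_k \to s$ for all $k$; now every vertex reaches $s$ unconditionally, so the graph is strongly connected exactly when $s$ also reaches every vertex, which by the previous sentence happens if and only if $x_i = 1$.

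The step needing the most care is the reachability-of-all gadget: the naive idea of letting the edge out of $s$ carry bit $x_k$ would make ``$s$ reaches all vertices'' equivalent to a conjunction over all $k$ rather than to the single bit $x_i$. This is why, in the construction above, Bob (who knows $i$) supplies the direct edges $s \to w_k$ for all $k \neq i$ so that only $w_i$ is ever in doubt, and why Alice's edges leave the auxiliary vertex $h$ rather than $s$. Once the gadgets are fixed, the reduction argument and the edge-count bookkeeping are routine, and Theorem~\ref{thm:communication} completes all three cases.
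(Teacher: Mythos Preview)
Your reductions are correct: each gadget encodes the bit $x_i$ exactly as claimed, and the simulation argument transferring the INDEX lower bound to the streaming memory is the standard one. The route, however, differs from the paper's. You use a star-shaped encoding on $\Theta(m)$ vertices (one leaf $w_k$ per coordinate), whereas the paper encodes $x$ into a bipartite graph on $L\cup R$ with $|L|=|R|=\lceil\sqrt{m}\rceil$, placing the edge $j\to k$ iff $x_{jn+k}=1$; Bob's extra edges (one edge for acyclicity, $O(\sqrt{m})$ edges for the other two properties) then isolate the coordinate $i=jn+k$.

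What this buys: your gadgets are simpler and need no case analysis, but the hard instances they produce are very sparse, with $n=\Theta(m)$ vertices; so the bound you obtain reads $\Omega(\epsilon n)$ in terms of the vertex count, which does not separate these problems from, say, undirected connectivity. The paper's bipartite encoding keeps $n=\Theta(\sqrt{m})$, so the same $\Omega(\epsilon m)$ bound becomes $\Omega(\epsilon n^{2})$ in the dense regime, ruling out any algorithm whose memory is $o(n^{2})$---a genuinely stronger statement even though the lemma, as phrased purely in terms of $m$, is satisfied by both arguments.
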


\begin{proof}
  We reduce from the index problem and use
  Theorem~\ref{thm:communication}.  Let $x$ denote the $m$-bit vector
  owned by Alice.  We define the stream using two sets of edges $E_1$,
  $E_2$. The edge stream first has the edges of $E_1$ in arbitrary
  order, followed by the edges of $E_2$, also in arbitrary order.  The
  set $E_1$ is entirely determined by the $m$-bit vector $x$ owned by
  Alice, and the set $E_2$ is entirely determined by the index $i$
  owned by Bob. The graph defined by $E_1\cup E_2$ has $\Omega (\sqrt
  m)$ vertices, and $E_1$ has $O(m)$ edges.  To solve the index
  problem, Alice constructs $E_1$ and simulates the streaming
  algorithm up to the point when $E_1$ has arrived, then sends to Bob
  the current state of the memory. Upon reception of the message, Bob
  constructs $E_2$ and continues the simulation up to the point when
  $E_2$ has finished arriving. Bob's final decision is then determined
  by the outcome of the streaming algorithm.  Thus, the lemma will
  be proved.

  \paragraph{Acyclicity}

  Let $n= \lceil \sqrt m \rceil$ and let $V=L\cup R$, where $L$ and $R$ both have size $n$ and have vertices labeled $0$ through $n-1$.
  $E_1$ is the bipartite graph that has an edge from vertex $j\in L$ to vertex $k\in R$ iff $x$ has a 1 in position $jn+k$. 
 $E_2$ consists of a single edge determined by Bob's bit $i$: let $k=
 i ~\rm{ mod }~ n$ and  $j={{i-k}\over {n}}$. Then $E_2$
  consists of the edge from vertex $k\in R$ to vertex $i\in L$. See
  figures~\ref{fig:acyclic_e1_concrete} and
  \ref{fig:acyclic_e2_concrete} for an illustration of an example $E_1$ and $E_2$.  

Observe that $E_1\cup E_2$ is acyclic iff $x_i=0$, thus the reduction
is complete. 

\begin{figure}[ht]
    \centering

  \subfigure[An example of $E_1$ corresponding to the bit-vector {001011010}.]{
    \includegraphics[scale=.60]{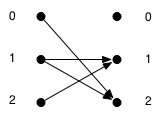}
    \label{fig:acyclic_e1_concrete}
}
  \subfigure[An example of $E_2$ corresponding to $B$'s index being 5,
  $j=1$ and $k=2$.]{
    \includegraphics[scale=.60]{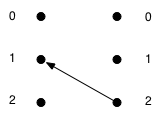}
    \label{fig:acyclic_e2_concrete}
}

\end{figure}

  \paragraph{Strong connectivity} 

  The construction for $E_1$ is the same as in the acyclic case. $E_2$
  consists of $4n-2$ edges determined by Bob's bit $i$: let $k= i
  ~\rm{ mod }~ n$ and $j={{i-k}\over {n}}$. Then $E_2$ consists of all
  edges from $k$ to $V-\{ k\}$, and from $V-\{ j\}$ to $j$. See
  figures~\ref{fig:sc_e1_concrete} and \ref{fig:sc_e2_concrete} for an
  illustration of an example $E_1$ and $E_2$.

We claim that $G$ is strongly connected iff $x_i=1$. Indeed, if $G$ is
strongly connected, there must be a path from $j$ to $k$. The only
edges leaving $j$ are to vertices in $R$ and the only edges entering
$k$ are from vertices in $L$. And the only edges extending from $R$ to $L$
are either entering $j$ or leaving $k$. Thus, the only possible path
from $j$ to $k$ is the single edge from $j$ to $k$ which is present
only when $x_i =1$. Now suppose that $x_i =1$. $k$ can certainly reach
every vertex and every vertex can reach $j$. Since the edge from $j$
to $k$ is present, we know that every vertex can reach $k$ and $k$ can
reach every vertex. Therefore, $G$ is strongly connected. Thus the reduction is complete.

\begin{figure}[ht]
    \centering

  \subfigure[An example of $E_1$ corresponding to the bit-vector {001011010}.]{
    \includegraphics[scale=.60]{acyclic_e1_concrete.png}
    \label{fig:sc_e1_concrete}
}
  \subfigure[An example of $E_2$ corresponding to $B$'s index being 5,
  $j=1$ and $k=2$.]{
    \includegraphics[scale=.60]{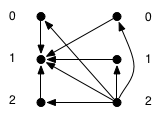}
    \label{fig:sc_e2_concrete}
}

\end{figure}






\paragraph{Reachability from $\mathbf{s}$}
$E_1$ is as above with additional vertex $s$ with in and out
degree $0$.

$E_2$ consists of $2n -1$ edges determined by Bob's bit $i$: let $k= i
~\rm{ mod }~ n$ and $j={{i-k}\over {n}}$. Then $E_2$ consists of one
edge from $s$ to $j\in L$, $n-1$ edges from $j\in L$ to $R-\{ k\}$,
and $n-1$ edges from $k\in R$ to $L-\{ j\}$. See
figures~\ref{fig:s-reach_e1_concrete} and
\ref{fig:s-reach_e2_concrete} for an illustration of an example $E_1$
and $E_2$.

  We claim that in $G$ $s$ reaches everything iff $x_i=1$. Indeed, if
  $s$ can reach all vertices in $G$, and the only edge from $s$ is to
  $j$, $j$ must be able to reach all vertices in $G -\{s\}$. In
  particular $j$ must reach $k$. The only edges extending from $R$ to
  $L$ are from $k$, so the only way for $j$ to reach $k$ is by the
  edge from $j$ to $k$ which is present only when $x_i =1$. Now
  suppose $x_i =1$. We know $s$ reaches $j$ and therefore all of $R$,
  including $k$, and $k$ reaches all of $L-\{j\}$. Therefore, $s$ reaches
  all vertices of $G$. Thus the reduction is complete.

\begin{figure}[ht]
    \centering

  \subfigure[An example of $E_1$ corresponding to the bit-vector {001011010}.]{
    \includegraphics[scale=.60]{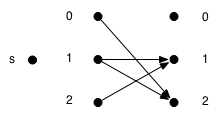}
    \label{fig:s-reach_e1_concrete}
}
  \subfigure[An example of $E_2$ corresponding to $B$'s index being 5,
  $j=1$ and $k=2$.]{
    \includegraphics[scale=.60]{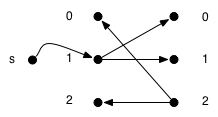}
    \label{fig:s-reach_e2_concrete}
}

\end{figure}

\end{proof}

\bibliographystyle{plain}
\bibliography{papers_read_at_brown}

\end{document}